\documentclass[aps,pra,twocolumn,superscriptaddress,showpacs,final,nofootinbib,floatfix]{revtex4}

\usepackage{amsmath,amsfonts,amsthm,amssymb}
\usepackage{graphicx}
\usepackage{braket}
\usepackage{setspace}
\usepackage{hyperref}
\usepackage{bbm}

\usepackage{tikz}
\usepackage{tikzscale}
\usepackage{standalone}
\usetikzlibrary{arrows}
\usetikzlibrary{positioning}
\usetikzlibrary{patterns}

\newtheorem{proposition}{Proposition}
\newtheorem{proposition?}{Proposition?}

\theoremstyle{definition}




\newcommand{\real}{\mathbb R} 
\newcommand{\half}{\frac{1}{2}} 
\newcommand{\mo}[1]{\left| #1 \right|} 

\newcommand{\ip}[2]{\left\langle\,#1\,|\,#2\,\right\rangle} 
\newcommand{\kb}[2]{|#1\rangle\langle #2|} 
\newcommand{\tr}[1]{\mathrm{tr}\left[#1\right]} 
\newcommand{\ptra}[1]{\mathrm{tr}_A[#1]} 

\newcommand{\id}{\mathbbm{1}} 




\newcommand{\va}{\mathbf{a}} 
\newcommand{\vb}{\mathbf{b}} 
\newcommand{\vc}{\mathbf{c}} 
\newcommand{\vx}{\mathbf{x}} 
\newcommand{\vy}{\mathbf{y}} 
\newcommand{\vz}{\mathbf{z}} 
\newcommand{\vsigma}{\boldsymbol{\sigma}} 




\newcommand{\A}{\mathsf{A}}
\newcommand{\B}{\mathsf{B}}
\newcommand{\Go}{\mathsf{G}}

\newcommand{\So}{\mathsf{S}}







\begin{document}

\title{Adaptive strategy for joint measurements}

\author{Roope Uola}
\affiliation{Naturwissenschaftlich-Technische Fakult\"at, 
Universit\"at Siegen, Walter-Flex-Str.~3, 57068 Siegen, Germany}

\author{Kimmo Luoma}
\affiliation{Institut f{\"u}r Theoretische Physik, Technische Universit{\"a}t Dresden, 
D-01062,Dresden, Germany}

\author{Tobias Moroder}
\affiliation{Naturwissenschaftlich-Technische Fakult\"at, 
Universit\"at Siegen, Walter-Flex-Str.~3, 57068 Siegen, Germany}

\author{Teiko Heinosaari}
\affiliation{Turku Center for Quantum Physics, Department of Physics and 
Astronomy, University of Turku, Finland}

\begin{abstract}
We develop a technique to find simultaneous measurements for noisy quantum observables in finite-dimensional Hilbert spaces. We use the method to derive lower bounds for the noise needed to make incompatible measurements jointly measurable. Using our strategy together with recent developments in the field of one-sided quantum information processing we show that the attained lower bounds are tight for various symmetric sets of quantum measurements. We use this characterisation to prove the existence of so called 4-Specker sets in the qubit case.
\end{abstract}

\pacs{03.65.Ta, 03.65.Ca}

\maketitle

\section{Introduction}

Incompatibility of quantum devices is one of the fundamental features of quantum theory with a wide range of consequences \cite{HeMiZi16}. In particular, incompatibility of quantum measurements is known to be a very powerful tool in many branches of quantum information theory including entanglement detection \cite{Gühne04}, uncertainty relations \cite{Busch14, Busch13} and tasks demanding a Bell violation like quantum cryptography \cite{WoPeFe09, Brunner14}. Recent developments \cite{Heinosaari15, MQ, RU1} are suggesting that incompatibility can be seen as a resource for quantum information processing.
The resource theoretical aspect has spurred a development of monotones quantifying quantum incompatibility or non-joint measurability \cite{Heinosaari152, Pusey15, RU2} as a general quantum resource. These monotones are based on adding noise to a set of incompatible measurements and concluding numerically the noise threshold for the measurements to become jointly measurable. Besides the numerical methods a few measurement setups have been analysed analytically, including two \cite{Busch86, BuHe08} and three qubit measurements \cite{Yu13, LiSpWi11}, two mutually unbiased measurements \cite{CaHeTo12, Haapasalo15}, and a measurement setup including a Clifford algebra generalisation of the Pauli matrices \cite{KuHeFr14}.
There are also some general methods to derive either necessary \cite{Heinosaari13} or sufficient \cite{Zhu15} conditions for a set of measurements to be incompatible.

In this article we present a method to derive analytical bounds for the noise resistance of incompatible measurements on a $d$-level quantum system. Our method is based on an adaptive algorithm which starts with a set of well chosen Hilbert space vectors and results as a set of noisy compatible measurements. The algorithm gives lower bounds for the noise needed to make a set of $M$ quantum measurements compatible. 
We demonstrate the power of our method by re-deriving some of the known \cite{Busch86, CaHeTo12} joint measurement uncertainty relations, but then also applying the technique to various new symmetric measurement scenarios obtaining lower bounds for the noise robustness of the involved measurements. Moreover, we translate known quantum steering \cite{JoWi11, Saunders10} techniques into inequalities which, when violated, witness incompatibility.  These inequalities are shown to coincide with our lower bounds proving the optimality of our results. We conclude by constructing a so-called 4-Specker set of incompatible quantum measurements in the qubit scenario.

The paper is organized as follows: in Section II we recall the mathematical definition of joint measurability and we explain the general idea of the method, in Section III we exploit the strategy for qubit measurements, in Section IV we generalize our strategy for $d$-level quantum systems and apply it to two mutually unbiased bases (MUBs), and in Section V we prove the optimality of our results using known steering techniques. We conclude by proving the existence of a 4-Specker set in the qubit case in section VI and by stating our conclusions in section VII.

\section{Quantum incompatibility and the adaptive strategy}
 
Quantum incompatibility means the impossibility of measuring two or more quantum observables simultaneously. For the case of projective 
measurements incompatibility is characterized 
by the non-commutativity of the measurements but for general observables (i.e. positive operator valued measures or POVMs for short) this is no longer the case. Indeed, there exists 
non-commuting observables which allow a simultaneous measurement. 
Generally, incompatibility is formulated as 
the non-existence of a joint measurement: a set $\{\A_k\}_{k=1}^M$ of observables is compatible if there exist a joint observable $\Go$ from which one recovers the observables as marginals, i.e.
\begin{equation}\label{JMDef}
\A_k(x_k)=\sum_{x_i, i\neq k} \Go(x_1,...,x_m).
\end{equation}
A set of observables which is not compatible is called incompatible. In what follows, we develop an intuitive strategy for implementing joint measurements for several quantum measurement scenarios.

Suppose we have measurement devices for two observables $\A_1$ and $\A_2$, but that these observables are incompatible.
We can thus only hope to simultaneously implement their \emph{approximate} versions.
But how to perform a joint measurement even of their approximate versions if the only available devices are the ones for $\A_1$ and $\A_2$? 

A rough method as presented in Fig.~\ref{fig:joint_guess} is to toss a coin and measure either $\A_1$ or $\A_2$, depending on the result of the coin tossing.
Since a joint measurement should give an outcome for both observables, one can then for example draw randomly an outcome for the other observable.
A POVM describing this procedure is
\begin{align}\label{eq:1}
\Go(\alpha_1,\alpha_2) = \mu p(\alpha_2) \A_1(\alpha_1) + (1-\mu) p(\alpha_1) \A_2(\alpha_2) \, ,
\end{align}
where $p$ is a probability distribution and $\mu$ is the probability that the coin tossing leads to the choice of the first observable.
The noisy versions of $\A_1$ and $\A_2$ are $\mu \A_1 + (1-\mu) p \id$ and $(1-\mu) \A_2 + \mu p \id$, respectively.
 
\begin{figure}
  \includegraphics{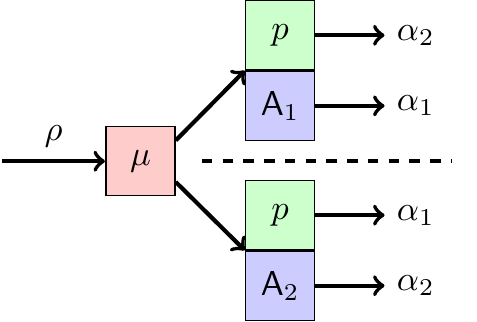}
  \caption{\label{fig:joint_guess} (Color online) Non-adaptive strategy to obtain a joint observable for approximations of two observables. A system in a state $\rho$ enters the measurement device. With probability $\mu$ the observable $\A_1$ is measured and for observable $\A_2$ a random outcome is drawn. With probability $1-\mu$ the observable $\A_2$ is measured and a value for the observable $\A_1$ is drawn randomly.}
\end{figure}

There is a way to improve the previous procedure if we know something about the relation of $\A_1$ and $\A_2$. 
Namely, we can take an advantage of the obtained measurement outcome and adapt the random choice accordingly. 
This means that we replace the probability distribution $p$ by conditional probability distributions, hence leading to a joint observable
\begin{align}\label{eq:2}
\Go(\alpha_1,\alpha_2) =& \mu p(\alpha_2 \mid \A_1 = \alpha_1 ) \A_1(\alpha_1) \notag\\ 
                        &+ (1-\mu) p(\alpha_1 \mid \A_2 = \alpha_2 ) \A_2(\alpha_2) \, .
\end{align}
This has an obvious generalization to any finite number of observables. The strategy is illustrated in Fig.~\ref{fig:joint_adaptive}.

\begin{figure}
  \includegraphics{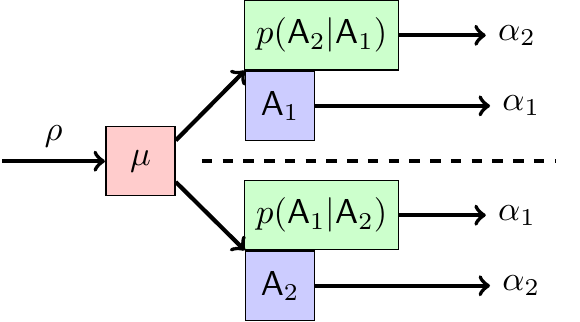}
  \caption{\label{fig:joint_adaptive} (Color online) Adaptive strategy to obtain a joint observable for approximations of two observables.. In comparison to the non-adaptive strategy (Fig. \ref{fig:joint_guess}) in the adaptive strategy the outcome of the non-measured observable is conditioned on the outcome of the measured observable.}
\end{figure}

There is still an important modification of the previous procedure that adds a new dimension of flexibility (see Fig.~\ref{fig:joint_adaptive_aux}). Suppose we want to perform a joint measurement of noisy versions of $\A_1,\ldots,\A_M$, but we have measurement devices for some 
other (incompatible) observables $\B_1,\ldots,\B_N$ in our possessions. 
Again, we make a random choice of which observable $\B_k$ we measure. 
Based on the obtained outcome, we create $M$ outcomes $\alpha_1,\ldots,\alpha_M$ that are interpreted as the outcomes of noisy versions 
of $\A_1,\ldots,\A_M$.
If an observable $\B_k$ gives an outcome $\beta_k$, then the resulting set of outcomes $\alpha_1,\ldots,\alpha_M$ is obtained with some 
conditional probability $p(\alpha_1, \ldots,\alpha_M \mid \B_k = \beta_k )$. 
We also note that the observables $\B_k$ do not have to be chosen with equal probability, but in general we can throw an $N$-sided 
biased dice giving the outcome $k$ (i.e. telling to measure $\B_k$) with probability $\mu_k$.
In this general case the obtained POVM is thus
\begin{align}\label{eq:3}
\Go(\alpha_1,\ldots,\alpha_M) =& \mu_1 \sum_\beta p(\alpha_1,\ldots,\alpha_M \mid \B_1 = \beta_1 ) \B_1(\beta_1) 
                               + \cdots\notag\\
                               + &\mu_N \sum_\beta p(\alpha_1, \ldots,\alpha_M \mid \B_N = \beta_N ) \B_N(\beta_N).
\end{align}

\begin{figure}
  \includegraphics{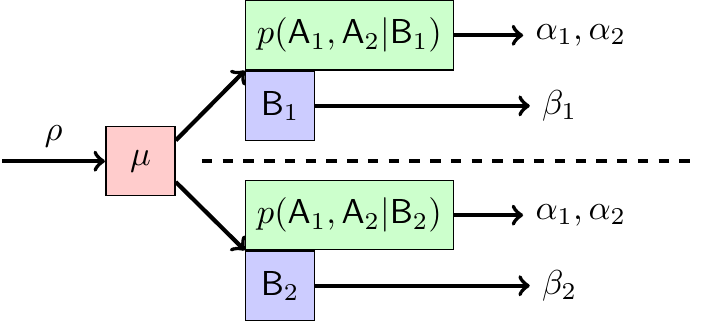}
  \caption{\label{fig:joint_adaptive_aux} (Color online)
    Adaptive strategy with auxiliary observables. In the general case the adaptive strategy uses some other observables $\B_1, \B_2$ to obtain approximations of the original observables $\A_1,\A_2$.}
\end{figure}

There is no guarantee that the marginals of this joint observable are good approximations of $\A_1,\ldots,\A_M$. This obviously depend on the choices of $\B_1,\ldots,\B_N$ and the conditional post processing. In what follows we show that our technique captures many of the known examples of joint measurability and, moreover, we present various new joint measurement scenarios.

\section{Adaptive strategy for qubit observables}\label{Adaptivesec}

The general procedure described in the last sections has two critical choices: test observables $\B_1,\ldots,\B_N$ and conditional post-processing functions  $p(\alpha_1, \ldots,\alpha_M \mid \B_k= \beta )$.
We will now specify the latter in the case of unbiased qubit observables.

First, we start from unbiased binary qubit observables.
For each unit vector $\va\in\real^3$ and $0\leq \lambda \leq 1$, we denote by $\So^{\lambda\va}$ the binary qubit observable
\begin{align*}
\So^{\lambda \va}(\pm 1) = \half ( \id \pm \lambda\va\cdot\vsigma ) \, .
\end{align*}
For values $0<\lambda<1$ we consider $\So^{\lambda \va}$ as a noisy version of the sharp qubit observable $\So^{\va}$ \cite{Busch86, BuHe08}.

Let $\va_1,\ldots,\va_M\in\real^3$ be a finite set of unit vectors.
We are seeking for a joint observable for noisy versions of $\So^{\va_1},\ldots,\So^{\va_M}$, i.e., we want to construct an observable $\Go$ such that
\begin{align*}
\sum_{\alpha_2,\ldots,\alpha_M=\pm 1} \Go(\alpha_1,\ldots,\alpha_M) = \So^{\lambda\va_1}(\alpha_1)\, ,
\end{align*}
and similarly for the other marginals. 
Following the general guideline of the adaptive strategy, we proceed as follows.

\begin{enumerate}
\item We fix a set of unit vectors $\vb_1,\ldots,\vb_N\in\real^3$ such that $\va_\ell \cdot \vb_k \neq 0$ for 
  all $\ell =\{ 1,\ldots,M\}$ and $k =\{ 1,\ldots,N\}$.
\item We choose randomly $k\in \{1,\ldots,N\}$.
\item We perform a measurement of $\So^{\vb_k}$ in the input state $\varrho$, hence obtain an outcome $\beta_k=\pm 1$ with the 
  probability $\tr{\varrho \So^{\vb_k}(\pm 1)}$.
\item For each $\ell =\{ 1,\ldots,M\}$, we decide that the outcome $\alpha_\ell$ is $\beta_k$ if $\va_\ell \cdot \vb_k > 0$ 
  and $-\beta_k$ if $\va_\ell \cdot \vb_k < 0$. 
\item As a result, we get a list $(\alpha_1,\ldots,\alpha_M)$ of outcomes. 
\item It is possible that some combination $(\alpha_1,\ldots,\alpha_M)$ does not result in the process at all. 
  In this case we set $\Go(\alpha_1,\ldots,\alpha_M)=0$.
\end{enumerate}

The marginals of $\Go$ are binary qubit observables, but they are not guaranteed to be unbiased noisy versions of the original 
observables $\So^{\va_1},\ldots,\So^{\va_M}$.
However, we will next see that the marginals are unbiased noisy versions of the original observables in many symmetric situations if the vectors 
$\vb_1,\ldots,\vb_N$ are chosen properly.

\subsection{Planar directions}\label{Planarsec}

We begin our discussion with a well-known example of two orthogonal qubit observables $\So^{\bf x}$ and $\So^{\bf y}$ \cite{Busch86}.
 We want to build our joint measurement by finding two measurement directions 
and then randomly performing one of the measurements. For this purpose we employ projective measurements in the directions which are 
equal superpositions of the Bloch vectors ${\bf x}, {\bf y}$ of $\So^{\bf x}(+)$ and $\So^{\bf y}(+)$. If an outcome of a measurement in the direction $\vb_1 = \frac{1}{\sqrt 2}({\bf x}+{\bf y})$ is positive 
(resp. negative), then the measurement outcomes of $\So^{\bf x}$ and $\So^{\bf y}$ are decided to be positive 
(resp. negative) as $\vx\cdot\vb_1>0$ and $\vy\cdot\vb_1>0$. In a similar way, if a measurement in the direction $\vb_2 = \frac{1}{\sqrt 2}({\bf x}-{\bf y})$ is performed and a positive (resp. negative) outcome is obtained, then the measurement outcomes of $\So^{\bf x}$ and $\So^{\bf y}$ are decided to be $+$ and $-$ (resp. $-$ and $+$) as $\vx\cdot\vb_2>0$ and $\vy\cdot\vb_2<0$. In this procedure the actually measured observable 
is a POVM $\Go$ given by the operators
\begin{align*}
\Go(+,+)&=\frac{1}{2}\So^{\vb_1}(+) \, ,& \Go(-,-) &=\frac{1}{2}\So^{\vb_1}(-)\, ,\nonumber\\
\Go(+,-)&=\frac{1}{2}\So^{\vb_2}(+)\, ,
& \Go(-,+)& =\frac{1}{2}\So^{\vb_2}(-).
\end{align*}
The marginals (post-processings) of $\Go$ are given as
\begin{align}
\Go(+,+)+\Go(+,-)&=\lambda \So^{\bf x}(+)+(1-\lambda)\frac{1}{2}\id,\\
\Go(+,+)+\Go(-,+)&=\lambda \So^{\bf y}(+)+(1-\lambda)\frac{1}{2}\id,
\end{align}
with $\lambda=\frac{1}{\sqrt 2}\approx 0.7071$. The parameter $\lambda$ is usually called the noise parameter. It tells how much white noise is added to the observable. Surprisingly, the obtained value of $\lambda$ is known to be necessary and sufficient for the 
joint measurability of the observables $\So^{\bf x}$ and $\So^{\bf y}$ \cite{Busch86}.
\begin{figure}
  \includegraphics[width=0.49\textwidth]{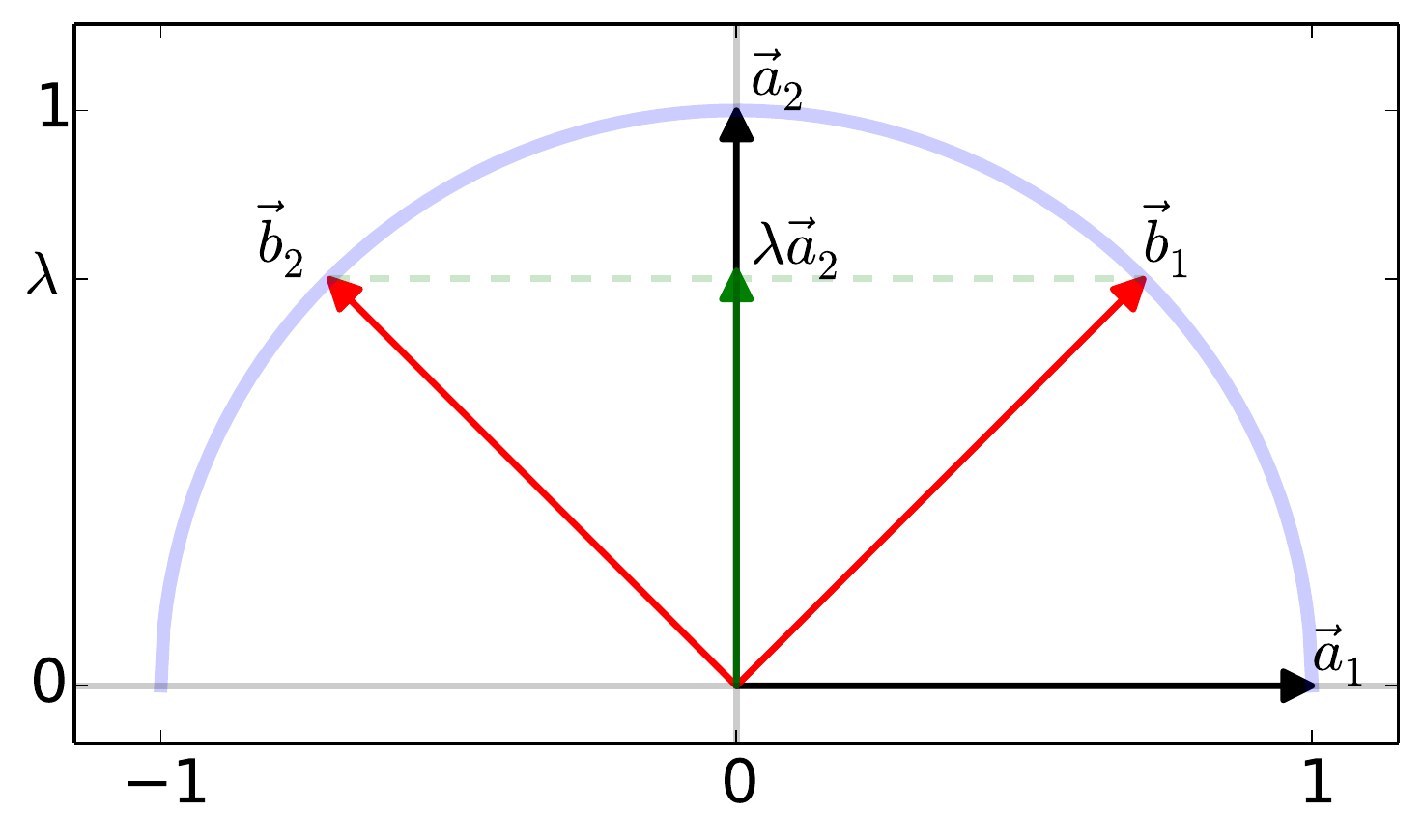}
  \caption{\label{fig:TwoObservablesInPlane} (Color online) Two orthogonal observables in plane.}
\end{figure}

The following proposition generalizes the previous example for symmetrical arrangement of qubit observables in a plane.

\begin{proposition}\label{sec:observables-plane}
Let $M\geq 2$ be an integer and 
\begin{equation*}
\va_k = \cos \theta_k \vx + \sin\theta_k \vy \, , \quad \theta_k=(k-1) \pi/M
\end{equation*}
 for $k=1,\ldots,M$.
The observables $\So^{\lambda\va_1},\ldots,\So^{\lambda\va_M}$ are jointly measurable if
\begin{align*}
\lambda \leq  \frac{1}{M \sin(\frac{\pi}{2M})} \, .
\end{align*}
\end{proposition}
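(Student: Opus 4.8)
The plan is to exhibit a joint observable by running the qubit adaptive strategy (steps 1--6 above) with a symmetric choice of test directions and the deterministic sign rule, and then to verify that its marginals come out as exact noisy versions of $\So^{\va_1},\dots,\So^{\va_M}$. First I would fix $N=M$ test vectors: when $M$ is odd take $\vb_k=\va_k$ themselves, and when $M$ is even take $\vb_k=\cos(\theta_k+\tfrac{\pi}{2M})\,\vx+\sin(\theta_k+\tfrac{\pi}{2M})\,\vy$, i.e.\ the bisectors of consecutive directions. For these choices $\va_\ell\cdot\vb_k\neq 0$ for all $\ell,k$, since the vanishing condition would force $\cos$ of an \emph{odd} multiple of $\pi/(2M)$ to be zero in the even case and $\cos((\ell-k)\pi/M)=0$, i.e.\ $2(\ell-k)=M$, in the odd case, each impossible for the relevant parity. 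With uniform weights $\mu_k=1/M$ and the sign rule, steps 1--6 then yield a POVM $\Go$ with at most $2^M$ nonzero outcomes; since each branch $k$ distributes the operators $\tfrac1M\So^{\vb_k}(\pm 1)$ among the outcomes, positivity and $\sum\Go=\id$ are immediate.

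Next I would compute the marginals. In branch $k$ the sign rule sends the outcome $\beta_k$ of $\So^{\vb_k}$ to the \emph{deterministic} tuple with $\alpha_\ell=\mathrm{sign}(\va_\ell\cdot\vb_k)\,\beta_k$, so summing over the unobserved components leaves
\begin{align*}
\Go_\ell(+1)=\sum_{k=1}^{M}\tfrac1M\,\So^{\vb_k}\!\big(\mathrm{sign}(\va_\ell\cdot\vb_k)\big)
=\tfrac12\id+\tfrac12\Big(\tfrac1M\sum_{k=1}^{M}\mathrm{sign}(\va_\ell\cdot\vb_k)\,\vb_k\Big)\cdot\vsigma .
\end{align*}
Hence $\Go_\ell=\So^{\lambda\va_\ell}$ exactly provided $\tfrac1M\sum_k\mathrm{sign}(\va_\ell\cdot\vb_k)\,\vb_k=\lambda\va_\ell$ for every $\ell$, and the whole problem reduces to evaluating this vector sum.

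For that, replacing each $\vb_k$ by $\mathrm{sign}(\va_\ell\cdot\vb_k)\,\vb_k$ reflects all test directions into the open half-plane $\{\vu:\vu\cdot\va_\ell>0\}$; since $\vb_1,\dots,\vb_M$ together with their antipodes form $2M$ equally spaced directions (spacing $\pi/M$), the reflected list consists precisely of the $M$ such directions lying strictly inside that half-plane, and the parity-matched offset makes them symmetric about $\va_\ell$. Their sum is therefore $\big(\sum_j\cos\gamma_j\big)\va_\ell$, where the offsets $\gamma_j$ from $\va_\ell$ run over $0,\pm\tfrac{\pi}{M},\dots,\pm\tfrac{(M-1)\pi}{2M}$ for odd $M$ and over $\pm\tfrac{\pi}{2M},\dots,\pm\tfrac{(M-1)\pi}{2M}$ for even $M$; the elementary identity $\sum_{i=1}^{n}\cos((2i-1)x)=\tfrac{\sin 2nx}{2\sin x}$ (together with $\sum_{i=0}^{n}\cos(ix)=\tfrac12+\tfrac{\sin((n+1/2)x)}{2\sin(x/2)}$ for the odd case) gives $\sum_j\cos\gamma_j=1/\sin(\pi/(2M))$ in both cases. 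Thus $\Go$ is a joint observable of $\So^{\lambda\va_1},\dots,\So^{\lambda\va_M}$ with $\lambda=1/(M\sin(\pi/(2M)))$. Finally, for $0\le\lambda'\le\lambda$ the observable $\So^{\lambda'\va}$ is the white-noise mixture $\tfrac{\lambda'}{\lambda}\So^{\lambda\va}+(1-\tfrac{\lambda'}{\lambda})\tfrac12\id$, a classical post-processing of $\So^{\lambda\va}$; applying the corresponding post-processing to each marginal of $\Go$ (equivalently, mixing $\Go$ with the trivial uniform joint observable) produces a joint observable of $\So^{\lambda'\va_1},\dots,\So^{\lambda'\va_M}$, which is the claim.

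The main obstacle I anticipate is the configuration bookkeeping across the two parities of $M$: one has to choose the rotational offset of the test directions so that \emph{simultaneously} no $\va_\ell$ is orthogonal to any $\vb_k$ and the reflected directions are genuinely symmetric about each $\va_\ell$ --- only then do the marginal Bloch vectors point exactly along $\va_\ell$ rather than merely nearby, so that the marginals are honest noisy versions of the intended observables. Once the set-up is correct, the value of $\lambda$ is a single cosine-sum evaluation, and everything else (validity of $\Go$ as a POVM, the scaling-down step) is routine.
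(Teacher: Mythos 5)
Your proof is correct and follows essentially the same route as the paper's: the same choice of test directions ($\vb_k=\va_k$ for odd $M$, the bisectors for even $M$), the same sign rule, and the same cosine-sum evaluation via Lagrange's identity yielding $\lambda = 1/(M\sin(\pi/(2M)))$. Your additional care about the non-orthogonality conditions, the symmetry of the reflected directions about each $\va_\ell$, and the final white-noise mixing step for $\lambda'<\lambda$ are welcome details that the paper leaves implicit.
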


\begin{proof}
\emph{Suppose that $M$ is odd}.
We choose $\vb_k=\va_k$ for $k=1,\ldots,M$ and follow the previously described procedure.
We have
\begin{align*}
\va_k \cdot \va_\ell & = \cos \theta_k \cos \theta_\ell + \sin\theta_k \sin\theta_\ell \\
& = \cos (\theta_k - \theta_\ell) \\
& = \cos \frac{(k-\ell)\pi}{M} \, .
\end{align*}
Hence,
\begin{align*}
\va_k \cdot \va_\ell > 0 \quad \mathrm{if} \ \mo{k-\ell}<M/2, \\
\va_k \cdot \va_\ell < 0 \quad \mathrm{if} \ \mo{k-\ell}>M/2.
\end{align*}
The first marginal of $\Go$ is
\begin{align*}
\sum_{\alpha_2,\ldots,\alpha_M=\pm 1} \Go(\alpha_1,\ldots,\alpha_M) = \So^{\lambda\va_1}(\alpha_1) \, ,
\end{align*}
where 
\begin{align*}
\lambda = \frac{1}{M} \left( 1+ 2 \sum_{k=1}^{(M-1)/2 } \cos(\frac{k\pi}{M}) \right) \, .
\end{align*}
By Lagrange's trigonometric identity we have 
\begin{align*}
 \sum_{k=1}^{(M-1)/2 } \cos(\frac{k\pi}{M})  = -\half + \frac{1}{2 \sin(\frac{\pi}{2M})}\, .
\end{align*} 

\emph{Suppose then that $M$ is even}.
We choose 
\begin{align*}
& \vb_k= \cos(\theta_k +\frac{\pi}{2M}) \vx + \sin(\theta_k +\frac{\pi}{2M}) \vy \, , \\ 
& \theta_k=\frac{(k-1) \pi}{M}\, ,
\end{align*}
for $k=1,\ldots,M$. 
The first marginal of $\Go$ is
\begin{align*}
\sum_{\alpha_2,\ldots,\alpha_M=\pm 1} \Go(\alpha_1,\ldots,\alpha_M) = \So^{\lambda\va_1}(\alpha_1) \, ,
\end{align*}
where 
\begin{align*}
\lambda &= \frac{2}{M}  \sum_{k=1}^{M/2 } \cos(\frac{(2k-1)\pi}{2M})=\frac{1}{M\sin(\frac{\pi}{2M})} \, .
\end{align*}
\end{proof}

\subsection{Non-planar directions}\label{Platsec}

In this section we apply our method for qubit observables whose Bloch vectors are not in the same plane but are situated in a symmetric way. 
We explain the method for the case of three and four observables and we note that the noise parameter for these settings is the same. Moreover, we give the results concerning two more complicated cases, and the proofs for these results are given in the appendix. 

\subsubsection{Three observables}\label{sec:three-observables}

Let us first choose $\va_1 = \vx$, $\va_2 = \vy$ and $\va_3=\vz$. 
Adding also opposite directions the vectors would form an octahedron. 
However, since two vectors $\va$ and $-\va$ determine the same binary observable up to the permutation of outcomes, we keep only the positive directions.

For our adaptive strategy we choose $\vb_ 1= \frac{1}{\sqrt{3}}(\vx + \vy + \vz)$, $\vb_ 2= \frac{1}{\sqrt{3}}(-\vx + \vy + \vz)$, 
$\vb_ 3= \frac{1}{\sqrt{3}}(\vx - \vy + \vz)$, $\vb_ 4= \frac{1}{\sqrt{3}}(-\vx - \vy + \vz)$, see Fig.~\ref{fig:cube}. If in this case we measure for example in the direction of $\vb_1$ and get a positive (resp. negative) outcome our adaptive strategy assigns the value $+$ (resp. $-$) to all of the observables $\A_1,\A_2,$ and $\A_3$. The nonzero elements of the constructed joint observable $\Go$ are thus (see Fig.~\ref{fig:cube})
\begin{align*}
& \Go(+,+,+) = \frac{1}{4} \So^{\vb_1}(+) \, , & \Go(-,-,-) = \frac{1}{4} \So^{\vb_1}(-) \, ,  \\
& \Go(+,+,-) = \frac{1}{4} \So^{\vb_4}(-)  \, , & \Go(-,-,+) = \frac{1}{4} \So^{\vb_4}(+) \, , \\
& \Go(+,-,+) = \frac{1}{4} \So^{\vb_3}(+) \, ,  & \Go(-,+,-) = \frac{1}{4} \So^{\vb_3}(-) \, , \\
& \Go(+,-,-) = \frac{1}{4} \So^{\vb_2}(-)  \, , & \Go(-,+,+) = \frac{1}{4} \So^{\vb_2}(+) \, .
\end{align*}
The marginals of $\Go$ are $\So^{\lambda \va_1}$, $\So^{\lambda \va_2}$ and $\So^{\lambda \va_3}$ with the noise parameter $\lambda=1/\sqrt{3}\approx 0.5774$, and we have thus reproduce the result first proved in \cite{Busch86}.
Let us note that this noise parameter is known to be the boundary point, meaning that for any $\lambda>1/\sqrt{3}$ the three observables are incompatible \cite{Yu13}.

\begin{figure}[h!]
  \includegraphics{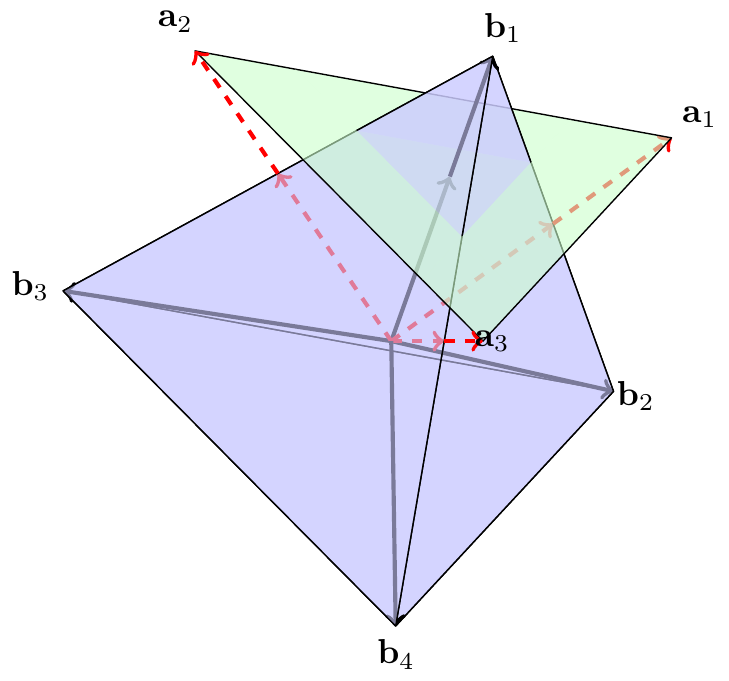}
  \caption{\label{fig:cube} (Color online) Three observables. The vectors $\va_i$, $i=1,2,3$, are the desired directions and the vectors $\vb_j$, $j=1,2,3,4$, are the guessing directions.}
\end{figure}
\subsubsection{Four observables}\label{sec:four-observables}
We then choose $\va_1 =  \frac{1}{\sqrt{3}}(\vx + \vy + \vz)$, $\va_2 = \frac{1}{\sqrt{3}}(\vx - \vy - \vz)$, 
$\va_3=\frac{1}{\sqrt{3}}(-\vx + \vy - \vz)$ and $\va_4=\frac{1}{\sqrt{3}}(-\vx - \vy + \vz)$, so the vectors go to the vertices of a tetrahedron.
For our adaptive strategy we choose $\vb_ 1=\vx$, $\vb_2=\vy$ and $\vb_3=\vz$, see Fig.~\ref{fig:tetra}. Now the nonzero elements of the joint observable $\Go$ are
\begin{align*}
& \Go(+,+,-,-) = \frac{1}{3} \So^{\vb_1}(+) \, , & \Go(-,-,+,+) = \frac{1}{3} \So^{\vb_1}(-) \, ,  \\
& \Go(+,-,+,-) = \frac{1}{3} \So^{\vb_2}(+) \, , & \Go(-,+,-,+) = \frac{1}{3} \So^{\vb_2}(-) \, , \\
& \Go(+,-,-,+) = \frac{1}{3} \So^{\vb_3}(+) \, , & \Go(-,+,+,-) = \frac{1}{3} \So^{\vb_3}(-) \, .
\end{align*}
The marginals of $\Go$ are $\So^{\lambda \va_j}$ with $\lambda=\frac{1}{\sqrt{3}}$. This value of the noise parameter coincides with the one for three observables (see previous subsection). 

\begin{figure}[h!]
  \includegraphics[width=0.4\textwidth]{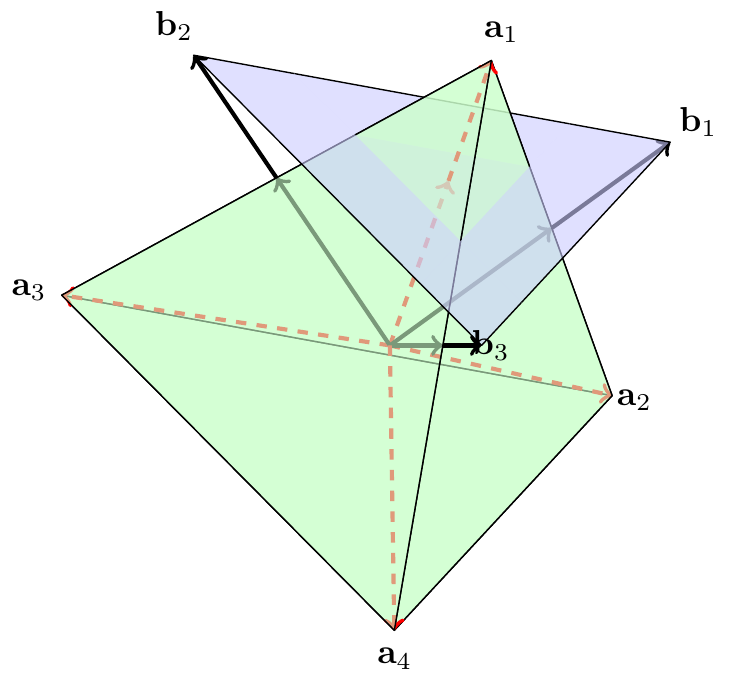}
  \caption{\label{fig:tetra} (Color online) Four observables. The vectors $\va_i$, $i=1,2,3,4$, are the desired directions and the vectors $\vb_j$, $j=1,2,3$, are the guessing directions.}
\end{figure}

\subsubsection{Six and ten observables}\label{sec:six-observables}

Icosahedron has 12 vertices and these directions determine 6 binary observables, while
dodecahedron has 20 vertices and these directions determine 10 binary observables.
Using the adaptive strategy we obtain the following values for the noise parameters (see Appendix): $\lambda_6=\frac{1+\sqrt 5}{6}\approx 0.5393$ and $\lambda_{10}=\frac{3+\sqrt{5}}{10}\approx 0.5236$. 
As we will later see, these are the least noise values making the binary observables jointly measurable.
 
\subsection{Adaptive strategy for the non-symmetric case}

The adaptive strategy gives an optimal joint observable also for the case of two arbitrary unbiased qubit observables.
For this purpose, let $\va_1$ and $\va_2$ be two Bloch vectors. We choose $\vb_ 1= \frac{1}{\|\va_1+\va_2\|}(\va_1 + \va_2)$ and $\vb_ 2= \frac{1}{\|\va_1-\va_2\|}(\va_1 - \va_2)$ to be the equal superpositions of these vectors.
The adaptive strategy gives a joint observable $\Go$ defined by
\begin{align*}
& \Go(+,+) = \mu \So^{\vb_1}(+)\, , \\
& \Go(+,-) = (1-\mu) \So^{\vb_2}(+)\, , \\
& \Go(-,+) = (1-\mu) \So^{\vb_2}(-)\, , \\
& \Go(-,-) = \mu \So^{\vb_1}(-),
\end{align*}
where the constant $\mu$ is the probability of making the measurement in the direction $\vb_1$ and $1-\mu$ is the probability of making the measurement in the direction $\vb_2$.

In order to get the correct marginals for $\Go$, i.e., $\So^{\lambda \va_1}$ and $\So^{\lambda \va_2}$, one needs to have 
\begin{align*}
\mu=\frac{\|\va_1+\va_2\|}{\|\va_1+\va_2\|+\|\va_1-\va_2\|}\, , 
\end{align*}
 and this gives
 \begin{align*}
 \lambda=\frac{2}{\|\va_1+\va_2\|+\|\va_1-\va_2\|} \, ,
 \end{align*}
which coincides with the optimal value originally presented in \cite{Busch86}.

\section{Generalised adaptive strategy for MUBs}\label{MUBadapt}

In order to use our strategy for measurements with more than two outcomes in a Hilbert space whose dimension is larger than or equal to two, one needs to introduce two minor modifications. First, in step three of the strategy a measurement of a two-valued observable $\So^{\vb_i}(\pm)$ is performed. In the general case such a measurement is not enough to distinguish between the outcomes of the desired observables. For example in the case where one of the desired observables (say $\A$) is three valued, a two valued observable could at best distinguish one of the outcomes of $\A$ and leave the other two open. This lack of distinguishability can be circumvented. Namely, instead of flipping a coin between two-valued observables, we build up the joint observable from the fixed vectors and not from the two-valued observables assigned to them.

Second, in the qubit case our strategy requires fixed Bloch vectors. The values of the joint observable are decided by checking the overlaps between these vectors and the Bloch vectors of the desired observables. In order to generalize our strategy we note that the inner products of the Bloch vectors (say $\va$ and $\vb$) and the inner products of the corresponding Hilbert space vectors (say $\psi_\va$ and $\psi_\vb$) are related by the formula
\begin{equation}
|\langle\psi_\va|\psi_\vb\rangle|^2 = \frac{1}{2}(1+\va\cdot\vb).
\end{equation}
Hence we can fix Hilbert space vectors instead of Bloch vectors and decide the values of the joint observable by maximising the inner products between the fixed rank-1 operators and the effects of the desired observales. More generally, one could fix higher rank effects and maximize the overlaps between these effects and the effects of the observables.

The generalisation withdraws the step three from our strategy. Step three can be seen as a requirement for the normalisation of the joint observable and, hence, we replace the third step by the condition that the fixed Hilbert space vectors as rank-1 operators must sum up to the identity.

To illustrate the generalised adaptive strategy we consider the case of two MUBs in a $d$-dimensional Hilbert space. Let $\{\varphi_j\}_{j=1}^n$ and $\{\psi_k\}_{k=1}^n$ be two mutually unbiased bases of $\mathbb C^n$, i.e., $\mo{\ip{\varphi_j}{\psi_k}}= 1/\sqrt{d}$ and let $\A_1(j)=|\varphi_j\rangle\langle\varphi_j|$ and $\A_2(k)=|\psi_k\rangle\langle\psi_k|$, $j,k=1,...,d$, be the corresponding observables.

We define unit vectors $b_{j,k}$ like in the previous section as equal superpositions of the desired directions $\varphi_j$ and $\psi_k$: 
\begin{align*}
b_{j,k}=N(\varphi_j +e^{i\theta_{j,k}}\psi_k)\, , \quad e^{i\theta_{j,k}}=\sqrt d\langle\psi_k|\varphi_j\rangle \, , 
\end{align*}
where $N$ is a normalization factor. It is easy to see that the overlaps $|\langle\varphi_m|b_{j,k}\rangle|^2$ and $|\langle\psi_n|b_{j,k}\rangle|^2$ are both maximal when $m=j$ and $n=k$. Hence we build up a joint measurement candidate by defining
\begin{align*}
\Go(j,k)=\frac{N^2}{d}|b_{j,k}\rangle\langle b_{j,k}|.
\end{align*}

It is now straight-forward to check that $\Go$ sums up to identity and that it gives as marginals the smeared versions of $\A_1$ and $\A_2$ with the amount
\begin{align*}
\lambda=\frac{1}{2}(1+\frac{1}{1+\sqrt d})\, ,
\end{align*}
of white noise.
This value has been earlier shown to be the optimal noise parameter in the case of Fourier connected MUBs \cite{CaHeTo12},\cite{Haapasalo15}.

\section{Proving optimality of adaptive joint measurements using steering}

In this latter part of our investigation we prove that the joint measurements considered in the previous sections are actually optimal. In other words, 
we show that their marginals possess the least possible amount of white noise, meaning that for any larger value of the noise parameter $\lambda$ the respective observables are incompatible. 
For proving the optimality, we employ some known steering techniques.
In this section we recall the basic setup of steering and explain why it helps to prove the optimality of the presented joint measurements.

Consider a bipartite scenario (Alice and Bob) sharing a quantum state $\rho_{AB}$. In a quantum steering task 
Alice tries to convince Bob that the shared state is entangled by making only local 
measurements $\A_k,\ k=1,...,n$ on her system and sending Bob the respective outcomes $x=1,...,m$ by classical communication. 

When Alice measures an observable $\A_k$ and gets an outcome $x$ Bob's conditional (non-normalized) state reads
\begin{equation}
\sigma_{x|k}=\text{tr}_A[(\A_k(x)\otimes \id)\rho_{AB}].
\end{equation}
We first notice that for any separable state $\sum_ip_i\rho_A^i\otimes\rho_B^i$ the conditional states always read
\begin{equation}
\sigma_{x|k}=\sum_i\text{tr}[\A_k(x)\rho_A^i]p_i\rho_B^i.
\end{equation}
This kind of an ensemble could also be created by classically post-processing a local set of (non-normalized) states 
$\{p_i\rho_B^i\}_i$ on Bob's side.

If, however, Bob runs over all local ensembles of non-normalized states $\{\sigma_i\}_i$ together with all possible post-processings and finds out that he can not reproduce his conditional states, he concludes that the shared state 
must be entangled. To be more precise, a bipartite setup (i.e. $\rho_{AB}$ and Alice's 
measurements $\{\A_k\}_{k=1}^n$) is non-steerable if there exists an ensemble of positive semi-definite operators $\{\sigma_\eta\}_\eta$ together with classical 
post-processings $p(x|k,\eta)$ such that
\begin{equation}\label{Steeringdef}
\sigma_{x|k}=\sum_\eta p(x|k,\eta)\sigma_\eta.
\end{equation}
If this is not the case, the setup is called steerable. For a steerable (resp. non-steerable) setup it is said that Alice can (resp. can not) steer Bob.

One notices that the definitions of steerability (\ref{Steeringdef}) and joint measurability (\ref{JMDef}) look very similar. It was proven in \cite{RU1, MQ} that this similarity originates from a one-to-one correspondence between these concepts: compatible measurements never allow Alice 
to steer Bob and steering is always possible with incompatible observables provided that the Schmidt rank of the shared pure state is $d$. In order to emphasize the link between steering and joint measurements it is convenient to use the maximally entangled state $|\psi\rangle=\frac{1}{\sqrt d}\sum_i |ii\rangle$ in a steering scenario. For the maximally entangled state the conditional states read
\begin{equation}
\sigma_{x|k}=\frac{1}{d}\A_k(x)^T\, ,
\end{equation}
where $T$ denotes the transpose. In what follows, we use this link to show that adaptive joint measurements are optimal in the sense that they possess the least possible amount of white noise.

\subsection{Necessary condition for qubit observables}
In this part we use known steering results to obtain the necessary condition for 
joint measurability of qubit observables. The steering inequality introduced in \cite{JoWi11} reads
\begin{align}\label{eq:4}
  \frac{1}{n}\sum_{k=1}^n \tr{(\A_k\otimes\vc_k\cdot\vsigma_k)\rho_{AB}} \leq C_n,
\end{align}
where $\A_k=\A_k(+)-\A_k(-)$, $\vc_k$ is a Bloch vector and $\rho_{AB}$ is a state of the composite system. The bound $C_n$ is the maximum value for the expression
\begin{align}
\frac{1}{n}\sum_{k=1}^n\sum_{x_k}x_k\tr{\sigma_{x|k}\vc_k\cdot\vsigma_k}\, ,
\end{align}
provided that the operators $\sigma_{x|k}$ are of the form (\ref{Steeringdef}). Here we have labeled the outcomes of the measurement $\A_k$ by $x_k$. The bound $C_n$ is obtained as 
\begin{align}\label{eq:5}
  C_n=\max_{x_k}\left(\lambda_{\rm max}\left(\frac{1}{n}\sum_{k=1}^n x_k\vc_k\cdot\vsigma_k\right)\right),
\end{align}
where $\lambda_{\rm max}\left(K\right)$ is the largest eigenvalue of a matrix $K$. 

For example, in the case of planar observables (Proposition \ref{sec:observables-plane}) one gets
\begin{align}
C_M^{\rm planar} = \frac{1}{M\sin\left(\frac{\pi}{2M}\right)}.
\end{align}
This is the maximum value for $C_M^{\rm planar}$ such 
that the scenario is not steerable. This means, like we discussed in the previous section, that a violation of the inequality (\ref{eq:4}) is only possible if Alice's observables are incompatible.

To see that in the planar case our adaptive strategy gives the optimal joint observable, let $\kb{\psi}{\psi}$ denote the maximally entangled state and notice that
\begin{align}\label{eq:6}
  \ptra{\A_k(x)\otimes \id \kb{\psi^\lambda}{\psi^\lambda}}=&\ptra{\A_k^\lambda(x)\otimes \id\kb{\psi}{\psi}},
\end{align}
where $\kb{\psi^\lambda}{\psi^\lambda} = \lambda \kb{\psi}{\psi} +\frac{1-\lambda}{4}\id\otimes\id$ and 
$\A_k^\lambda(x)=\lambda \A_k(x) + \frac{1-\lambda}{2}\tr{\A_k(x)}{}\mathbb{I}$. 

Inserting the noisy maximally entangled state $\kb{\psi^\lambda}{\psi^\lambda}$, the Bloch vectors of the planar observables from Proposition \ref{sec:observables-plane} for Bob, and the transposed planar observables for Alice\footnote{We choose transposed observables because using the properties of the maximally entangled state gives then a simple condition for steering:
\begin{align*}
\tr{\A_k^T\otimes\A_k\kb{\psi^\lambda}{\psi^\lambda}}=\lambda.
\end{align*}
Note that as this is the case for any choice of observables $\A_k$, one can use the equation (\ref{eq:5}) to calculate necessary joint measurability conditions for arbitrary qubit observables.}
to the left hand side of equation (\ref{eq:4}) one arrives with the condition $\lambda\leq C_M^{\rm planar}$. In other words, if $\lambda$ is larger than this threshold one violates the steering inequality and consequently Alice's observables are non-jointly measurable (transposition does not affect joint measurability) with the amount $\lambda$ of white noise (see equation (\ref{eq:6}). This means that joint measurability of Alice's observables implies that $\lambda\leq C_M^{\rm planar}$. Hence, we arrive with the following result

\begin{proposition}\label{prop:planar}
The planar observables introduced in Prop. \ref{sec:observables-plane} are jointly measurable if and only if 
$$
\lambda\leq\frac{1}{M\sin\left(\frac{\pi}{2M}\right)} \, .
$$
\end{proposition}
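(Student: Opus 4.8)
The plan is to combine two bounds that squeeze $\lambda$ from both sides. From Proposition~\ref{sec:observables-plane} we already know that the planar observables $\So^{\lambda\va_1},\ldots,\So^{\lambda\va_M}$ are jointly measurable whenever $\lambda\leq \frac{1}{M\sin(\pi/2M)}$; this is the ``if'' direction, so nothing more is needed there. The work is entirely in the ``only if'' direction: I must show that if the observables are jointly measurable then necessarily $\lambda\leq \frac{1}{M\sin(\pi/2M)}$, equivalently that for any $\lambda$ strictly above this threshold the observables are incompatible.

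For this I would invoke the joint-measurability/steering correspondence recalled in Section~V: a set of compatible observables can never produce a steerable assemblage, so exhibiting a violation of the steering inequality~(\ref{eq:4}) for a given $\lambda$ proves incompatibility at that $\lambda$. First I would set up the bipartite scenario with the noisy maximally entangled state $\kb{\psi^\lambda}{\psi^\lambda}=\lambda\kb{\psi}{\psi}+\tfrac{1-\lambda}{4}\id\otimes\id$, choose Alice's observables to be the transposes of the planar observables $\So^{\va_k}$, and choose Bob's Bloch vectors to be $\vc_k=\va_k$. By the identity~(\ref{eq:6}), the reduced assemblage on Bob's side is exactly the one coming from Alice's \emph{noisy} planar observables $\So^{\lambda\va_k}$ applied to the pure maximally entangled state, so steerability of this assemblage is equivalent to incompatibility of the noisy planar observables. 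Then, using the footnoted identity $\tr{\A_k^T\otimes\A_k\,\kb{\psi^\lambda}{\psi^\lambda}}=\lambda$, the left-hand side of~(\ref{eq:4}) evaluates to exactly $\lambda$, while the steering bound $C_M^{\rm planar}$ is, as stated in the text, equal to $\frac{1}{M\sin(\pi/2M)}$. Hence $\lambda>\frac{1}{M\sin(\pi/2M)}$ forces a violation of~(\ref{eq:4}), which forces steerability, which forces incompatibility; and since transposition preserves joint measurability, the original observables $\So^{\lambda\va_k}$ are incompatible for every such $\lambda$. Combined with the ``if'' direction this gives the claimed equivalence.

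The main obstacle is establishing that $C_M^{\rm planar}=\frac{1}{M\sin(\pi/2M)}$, i.e.\ evaluating the optimization~(\ref{eq:5}). One must compute $\max_{x_k=\pm1}\lambda_{\max}\!\left(\frac{1}{M}\sum_{k=1}^M x_k\,\va_k\cdot\vsigma\right)$ where the $\va_k$ are equispaced by $\pi/M$ in the $xy$-plane. Since $\frac{1}{M}\sum_k x_k\va_k\cdot\vsigma$ is a Hermitian traceless $2\times2$ matrix, its largest eigenvalue is $\bigl\|\frac{1}{M}\sum_k x_k\va_k\bigr\|$, so the problem reduces to maximizing the Euclidean norm of $\sum_k x_k\va_k$ over sign choices $x_k$; aligning the signs so that all vectors $x_k\va_k$ lie in a common half-plane (which is possible precisely because the directions span only $\pi$ of angle) and summing the resulting cosine series via Lagrange's trigonometric identity — the same identity already used in the proof of Proposition~\ref{sec:observables-plane} — yields $\frac{1}{M\sin(\pi/2M)}$. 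The only subtlety is the slight difference between the odd and even $M$ cases in how the optimal sign pattern distributes, but in both cases the extremal configuration is the one mirroring the construction in Proposition~\ref{sec:observables-plane}, so the two bounds match exactly and the squeeze is tight.
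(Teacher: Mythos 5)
Your proposal is correct and follows essentially the same route as the paper: the ``if'' direction is Proposition~\ref{sec:observables-plane}, and the ``only if'' direction uses the steering inequality~(\ref{eq:4}) of Jones--Wiseman with the noisy maximally entangled state, transposed planar observables for Alice, $\vc_k=\va_k$ for Bob, and the bound $C_M^{\rm planar}=\frac{1}{M\sin(\pi/2M)}$ from~(\ref{eq:5}). Your added sketch of how to evaluate~(\ref{eq:5}) — reducing $\lambda_{\max}$ of a traceless Hermitian qubit operator to the Euclidean norm of $\sum_k x_k\va_k$ and summing the cosine series — is a useful filling-in of a step the paper only asserts, and the extremal sign pattern is indeed the all-equal one, consistent with your description.
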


Calculating the values $C_M$ for the cases of Platonic solids and repeating the same procedure as above one finds that the joint measurements given by the adaptive strategy are optimal also in these cases. We summarize the results of this section in the following statement:

\begin{proposition}
The sufficient joint measurability conditions given in Section \ref{Planarsec} and \ref{Platsec} are also necessary.
\end{proposition}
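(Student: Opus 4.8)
The plan is to run, for each symmetric family in Sections \ref{Planarsec} and \ref{Platsec}, the same argument that was already carried out for the planar observables: produce a violation of the Jones--Wiseman steering inequality (\ref{eq:4}) whenever the noise parameter exceeds the value output by the adaptive strategy, and then invoke the steering--joint-measurability correspondence of \cite{RU1, MQ}. The planar case is precisely Proposition \ref{prop:planar}, so the remaining work concerns the octahedral, tetrahedral, icosahedral and dodecahedral configurations of Section \ref{Platsec}.

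Fix the desired Bloch vectors $\va_1,\ldots,\va_n$ (with $n=3,4,6,10$). I would insert into (\ref{eq:4}) the noisy maximally entangled state $\kb{\psi^\lambda}{\psi^\lambda}$, the transposed sharp observables $\A_k^T$ with $\A_k=\va_k\cdot\vsigma$ for Alice, and the choice $\vc_k=\va_k$ for Bob. Exactly as in the footnote to (\ref{eq:5}), this collapses the left-hand side of (\ref{eq:4}) to $\lambda$, so the inequality is violated precisely when $\lambda>C_n$. Since $\vc_k=\va_k$ and $\lambda_{\mathrm{max}}$ of $\big(\sum_k x_k\va_k\big)\cdot\vsigma$ equals $\no{\sum_k x_k\va_k}$, the bound (\ref{eq:5}) reduces to the purely Euclidean quantity
\begin{align*}
C_n=\frac1n\max_{x_1,\ldots,x_n=\pm1}\no{\sum_{k=1}^n x_k\va_k}=\frac1n\max_{\no{\vu}=1}\;\sum_{k=1}^n\mo{\va_k\cdot\vu}\, .
\end{align*}

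It then remains to evaluate $C_n$ and to match it with the adaptive value $\lambda_{\mathrm{suff}}$. For $\{\vx,\vy,\vz\}$ one gets $C_3=\tfrac13\no{\pm\vx\pm\vy\pm\vz}=1/\sqrt3$; for the four tetrahedral directions the optimal sign choice groups the vectors two against two and yields $C_4=\tfrac14\cdot\tfrac4{\sqrt3}=1/\sqrt3$; for the icosahedral (resp.\ dodecahedral) axes one substitutes the standard golden-ratio coordinates, takes $\vu$ along a vertex direction, and simplifies using $\phi^2=\phi+1$ to obtain $C_6=\tfrac16(1+\sqrt5)=\lambda_6$ and $C_{10}=\tfrac1{10}(3+\sqrt5)=\lambda_{10}$. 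With $C_n=\lambda_{\mathrm{suff}}$ in hand, $\lambda>\lambda_{\mathrm{suff}}$ forces a violation of (\ref{eq:4}), so the assemblage $\{\sigma_{x|k}\}$ is not of the form (\ref{Steeringdef}); by \cite{RU1, MQ} (the maximally entangled state has full Schmidt rank) the observables producing this steerable assemblage are incompatible, and by (\ref{eq:6}) these are the transposes of $\So^{\lambda\va_1},\ldots,\So^{\lambda\va_n}$, which, transposition preserving joint measurability, shows $\So^{\lambda\va_1},\ldots,\So^{\lambda\va_n}$ themselves are incompatible. Combined with the sufficiency established in Section \ref{Platsec}, this gives the asserted equivalence.

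The only genuinely delicate step is the last two cases of the Euclidean maximization: one has to be sure that the supremum of $\sum_k\mo{\va_k\cdot\vu}$ over the unit sphere is attained along a vertex axis rather than in the interior of one of the finitely many sign-cells cut out by the hyperplanes $\va_k\cdot\vu=0$. I expect this to be handled either by a direct finite check over the $2^n$ sign patterns $(x_k)$ — verifying that the pattern implicitly used by the adaptive strategy (the one encoded in the guessing vectors $\vb_j$) is the maximizer — or by a symmetry argument using the rotation group of the polytope.
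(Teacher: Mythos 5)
Your proposal is correct and follows essentially the same route as the paper, whose own ``proof'' is just the one-line remark that one computes $C_M$ for the Platonic-solid configurations and repeats the planar argument; you have filled in exactly that computation, and your values $C_3=C_4=1/\sqrt{3}$, $C_6=(1+\sqrt5)/6$, $C_{10}=(3+\sqrt5)/10$ all check out against the adaptive-strategy noise parameters. The caveat you flag --- confirming that the Euclidean maximum $\max_{x}\no{\sum_k x_k\va_k}$ is attained at the sign pattern used by the adaptive strategy --- is the only nontrivial step, and the finite check over sign patterns you propose settles it; the paper itself leaves this implicit.
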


\subsection{Necessary condition for MUBs}

In this part we prove that the joint measurability condition given in Section \ref{MUBadapt} are optimal.

For this purpose, suppose that Alice performs two different $d$-valued projective measurements given by two sets of MUBs $\{\varphi_j\}_{j=1}^d$ and $\{\psi_k\}_{k=1}^d$ as follows
\begin{equation}
\A_1(j)=\ket{\varphi_j}\bra{\varphi_j}, \:\: \A_2(k)=\ket{\psi_k}\bra{\psi_k}.
\end{equation}
The ensemble of states that Bob observes is denoted by $\{ \sigma_{x|i} \}_{x \in \mathbbm{Z}_d}$, $i=1,2$.

For this scenario the underlying state is non-steerable if and only if there exists a collection $\{ \omega_{jk}\}_{jk}$ of positive semidefinite operators $\omega_{jk} \geq 0$ such that
\begin{equation}
\label{eq:lhs}
\sigma_{j|1} = \sum_k \omega_{jk}, \:\: \sigma_{k|2} = \sum_j \omega_{jk},
\end{equation}
hold for all possible $j,k$. Note that here the notation $\omega_{jk}$ combines the post-processings and the positive operators in Eq.~(\ref{Steeringdef}). To certify that a decomposition of this form is impossible one needs to violate a steering inequality. For our purpose a useful inequality is characterized by a collection of operators $\{ Z_{jk} \}_{jk}$ which are all positive semidefinite $Z_{jk} \geq 0$ and furthermore satisfy the linear constraints 
\begin{equation}
\label{eq:LinCons}
Z_{jk} = Z_{jt} + Z_{sk} - Z_{st},
\end{equation}
for all possible $j,k,s,t$. If the observed ensembles $\{ \sigma_{j|1} \}_{j \in \mathbbm{Z}_d}$ and $\{ \sigma_{k|2}\}_{k \in \mathbbm{Z}_d}$ do have a decomposition of the above form, then it holds that
\begin{align}
\nonumber
\sum_{jk} &\tr{Z_{jk} \omega_{jk}} =\sum_{jk} \tr{(Z_{jt} + Z_{sk} - Z_{st}) \omega_{jk}} \\ 
\nonumber
&= \sum_j \tr{Z_{jt}(\sum_k \omega_{jk})} +\sum_k \tr{Z_{sk}(\sum_j \omega_{jk})}\\
&\ \ \ \ \ - \tr{Z_{st}(\sum_{j,k} \omega_{jk})} \\ 
\label{eq:FormSteerInequ}
& = \sum_j \tr{Z_{jt} \sigma_{j|1}} + \sum_k \tr{Z_{sk} \sigma_{k|2}} - \tr{Z_{st} \rho} \geq 0\, ,
\end{align}
for all $s,t$. Note that non-negativity holds because all $Z_{jk}$ and $\omega_{jk}$ are positive semidefinite. In addition we have used the fact that $\rho = \sum_j \sigma_{j|1} = \sum_k \sigma_{k|2}$. Hence if one violates this inequality one proves that such a positive semidefinite decomposition is impossible.

\subsection{Employed inequality}

In the following we use the following steering inequality. For the two given mutually unbiased basis sets $\{ \ket{\varphi_j} \}_{j \in \mathbbm{Z}_d}$, $\{ \ket{\psi_k} \}_{k \in \mathbbm{Z}_d}$ define 
\begin{equation}
\label{eq:Z1}
Z_{jk} = a \left( \ket{\varphi_j}\bra{\varphi_j} + \ket{\psi_k}\bra{\psi_k}\right) + b \id\, ,
\end{equation}
with 
\begin{equation}
a = -\frac{1}{(\sqrt{d}-1)(\sqrt{d}+2)}, \:\: b = \frac{\sqrt{d}+1}{\sqrt{d}(\sqrt{d}-1)(\sqrt{d}+2)}.
\end{equation}
In this form one directly verifies that all linear constraints given by Eq.~\eqref{eq:LinCons} are satisfied. In order to show that these operators are also positive semidefinite we express them in a different way.

Since the sets are mutually unbiased one has $\ip{\varphi_j}{\psi_k} = e^{i\theta_{jk}} /\sqrt{d}$. If one incorporates this phase into the superposition states
\begin{equation}
\ket{\chi^{\pm}_{jk}} = \frac{1}{N_{\pm}} \left( \ket{\varphi_j} \pm e^{-i \theta_{jk}} \ket{\psi_k} \right),
\end{equation}
one achieves that the resulting states become orthogonal $\ip{\chi^+_{jk}}{\chi^-_{jk}} = 0$. 
Furthermore note that the normalization is then also independent of $j$ and $k$.

Via these definitions one can rewrite the operators given by Eq.\eqref{eq:Z1} as 
\begin{equation}
\label{eq:Z2}
Z_{jk} = c \ket{\chi_{jk}^-}\bra{\chi_{jk}^-} + b(\id - \ket{\chi_{jk}^-}\bra{\chi_{jk}^-} - 
\ket{\chi_{jk}^+}\bra{\chi_{jk}^+})\, ,
\end{equation}
with 
\begin{equation}
c = \frac{2}{\sqrt{d}(\sqrt{d}-1)(\sqrt{d}+2)}.
\end{equation}
Since both terms in Eq.~\eqref{eq:Z2} consist of a positive constant multiplied with a positive semidefinite operator, this shows that all $Z_{kl}$ are positive semidefinite. 

Note that in order to derive the coefficients, one can start with $Z_{kl}$ as given by Eq.~\eqref{eq:Z2} but with open $b,c$. In order to achieve a form like Eq.~\eqref{eq:Z1} the coefficients in front of the terms $\ket{\varphi_j}\bra{\psi_k}$ and $\ket{\psi_j}\bra{\varphi_k} $ must vanish, which holds if
\begin{equation}
\frac{c-b}{N_-^2} + \frac{b}{N_+^2} = 0.
\end{equation}
The solution which additionally satisfies $\tr{Z_{kl}}=1$ is the given solution.

\subsection{The implication}
In this part we apply the steering inequality to derive the statement that the two observables $\A_1^\lambda$ and $\A_2^\lambda$ are not jointly measurable if $\lambda_{\rm max} \leq \lambda$.

We measure the maximally entangled state $\ket{\psi} = \frac{1}{\sqrt{d}}\sum_i \ket{ii}$ with the observables ${\A_1^{\lambda}}^T$ and ${\A_2^{\lambda}}^T$~\footnote{If we violate a steering inequality it means that ${\A_1^{\lambda}}^T$, ${\A_2^{\lambda}}^T$ are not jointly measureable; this implies further that also the transposed measurements are not jointly measurable.}. By the properties of the maximally entangled state, the conditional states are given by
\begin{equation}
\sigma_{j|1} = \frac{1}{d}\A_1^\lambda (j), \:\: \sigma_{k|2} = \frac{1}{d}\A_2^\lambda (k).
\end{equation}
If we now apply the steering inequality from Eq.\eqref{eq:FormSteerInequ} with the operators $Z_{kl}$ given by Eq.~\eqref{eq:Z1} one obtains
\begin{align*}
d \sum_{jk} & \tr{Z_{jk} \omega_{jk}} = \sum_j \tr{Z_{jt} \A_1^{\lambda}(j)} + \\
& + \sum_k \tr{Z_{sk} \A_2^{\lambda}(k)}  - \tr{Z_{st} \id} \\
&= 2d \left\{ \lambda \left[a \left(1 + \frac{1}{d}\right) +b \right] + \frac{1-\lambda}{d} \right\} - 1 \\&= 2 \left\{ 1- \lambda \left[ 1-a(d+1) -db\right]\right\} - 1 \\
&= 2 \left\{ 1-\lambda \left(1-\frac{1}{\sqrt{d}+2} \right)\right\} - 1 \geq 0.
\end{align*}
The inequality only holds if 
\begin{equation}
\lambda \leq \frac{1}{2}(1+\frac{1}{1+\sqrt d}) = \lambda_{\rm max},
\end{equation}
and this proves the statement.

\section{4-Specker set of qubit observables}

As a consequence of our previous results we can now show that there exists a so called 4-Specker set \cite{LiSpWi11} consisting of qubit observables. This means that there exists a set of four incompatible qubit observables which are not jointly measurable but any triplet of these observables forms a jointly measurable set. 
The precise set of these four observables is the following:

\begin{proposition}
Consider four qubit observables $\So^{\lambda\va_k}$ defined by equally distributed Bloch vectors on the upper side of the $xy$-plane, i.e. the angle between the vector $k$ and $k+1$ is $\pi/4$. There exists a smearing parameter $\lambda$ such that the observables $\So^{\lambda\va_k}$ form a 4-Specker set.
\end{proposition}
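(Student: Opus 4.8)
The plan is to exhibit a single noise value $\lambda$ at which the four observables $\So^{\lambda\va_1},\ldots,\So^{\lambda\va_4}$ are incompatible while each of the four triples built from them is jointly measurable. The first observation is that the four Bloch vectors are exactly those of Proposition~\ref{sec:observables-plane} with $M=4$, i.e.\ $\va_k=\cos\theta_k\,\vx+\sin\theta_k\,\vy$ with $\theta_k=(k-1)\pi/4$. Hence Proposition~\ref{prop:planar} applied with $M=4$ tells us that $\So^{\lambda\va_1},\ldots,\So^{\lambda\va_4}$ are jointly measurable if and only if $\lambda\le\lambda_4$, where $\lambda_4:=\frac{1}{4\sin(\pi/8)}\approx 0.6533$. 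It therefore suffices to produce some $\lambda>\lambda_4$ for which all four triples are jointly measurable.

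Next I would reduce the four triples to a single configuration. Joint measurability of qubit observables is invariant under a common orthogonal transformation of the Bloch vectors (it amounts to conjugating a candidate joint observable by a unitary or an antiunitary) and under flipping the sign of one Bloch vector (which merely relabels that observable's two outcomes); consequently the joint measurability of a coplanar triple depends only on the three lines that the Bloch vectors span, hence only on the multiset of angular gaps between those lines. A direct check shows that every one of the four triples drawn from $\{\va_1,\ldots,\va_4\}$ has gap multiset $\{\pi/4,\pi/4,\pi/2\}$, so all four are equivalent to the triple $\vc_1=\vx$, $\vc_2=\frac{1}{\sqrt2}(\vx+\vy)$, $\vc_3=\vy$.

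For this triple I would run the adaptive strategy of Section~\ref{Adaptivesec}, generalising its non-symmetric two-observable version to three coplanar observables: take the three guessing directions along $\vc_1+\vc_3$, $\vc_2-\vc_1$ and $\vc_3-\vc_2$, performed with probabilities proportional to $\no{\vc_1+\vc_3}$, $\no{\vc_2-\vc_1}$ and $\no{\vc_3-\vc_2}$, and assign outcomes by the sign rule of that section. One checks that these three directions lie in the three circular sectors on which the sign pattern $(\mathrm{sign}\,\vc_\ell\cdot\vb)_{\ell=1,2,3}$ is constant, equal to $(+,+,+)$, $(-,+,+)$ and $(-,-,+)$ respectively, and that the three marginals of the resulting POVM are exactly the unbiased smeared observables $\So^{\lambda_3\vc_\ell}$ with $\lambda_3:=\frac{2}{\no{\vc_1+\vc_3}+\no{\vc_2-\vc_1}+\no{\vc_3-\vc_2}}=\frac{2}{\sqrt2+4\sin(\pi/8)}\approx 0.6791$. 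By monotonicity of joint measurability in the amount of added white noise, each triple is then jointly measurable for every $\lambda\le\lambda_3$.

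Finally, since $\sin^2(\pi/8)=\frac{2-\sqrt2}{4}>\frac18$, we have $4\sin(\pi/8)>\sqrt2$ and hence $\lambda_3>\lambda_4$; choosing any $\lambda$ in the nonempty interval $(\lambda_4,\lambda_3]$ then makes $\So^{\lambda\va_1},\ldots,\So^{\lambda\va_4}$ incompatible while every triple among them is jointly measurable, i.e.\ a $4$-Specker set. The step I expect to be most delicate is the adaptive construction for $\vc_1,\vc_2,\vc_3$ — verifying that the chosen guessing directions really fall in the correct sign-pattern sectors and that the post-processed marginals come out as the desired unbiased noisy observables with a common noise parameter — since Proposition~\ref{sec:observables-plane} covers only symmetric planar families and the non-symmetric construction in Section~\ref{Adaptivesec} is stated only for two observables; given $\lambda_3$, the closing inequality $\lambda_3>\lambda_4$ is immediate.
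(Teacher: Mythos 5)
Your proof is correct, and its overall architecture is the same as the paper's: show the quadruple is incompatible for $\lambda>\frac{1}{4\sin(\pi/8)}$ via Proposition~\ref{prop:planar}, show every triple is compatible up to a strictly larger threshold, and pick $\lambda$ in the gap. The difference lies in how the triple threshold is obtained. The paper simply invokes the symmetry of the configuration together with formula~16 of \cite{Yu13}, which gives joint measurability of each triple if and only if $\lambda\leq(\cos(\pi/4)+2\sin(\pi/8))^{-1}$. You instead reduce all four triples to the single configuration $\{\vx,\tfrac{1}{\sqrt2}(\vx+\vy),\vy\}$ by orthogonal transformations and outcome relabelings, and then build an explicit joint observable by a three-observable non-symmetric version of the adaptive strategy, with guessing directions along $\vc_1+\vc_3$, $\vc_2-\vc_1$, $\vc_3-\vc_2$ and weights proportional to their norms. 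This checks out: the three directions do lie in the sectors with sign patterns $(+,+,+)$, $(-,+,+)$, $(-,-,+)$, the signed sums $(\vc_1+\vc_3)\mp(\vc_2-\vc_1)\mp(\vc_3-\vc_2)$ collapse to $2\vc_\ell$, so the marginals are the unbiased noisy observables with common $\lambda_3=\frac{2}{\sqrt2+4\sin(\pi/8)}$, which is algebraically identical to the Yu--Oh value $(\cos(\pi/4)+2\sin(\pi/8))^{-1}\approx 0.679$. Your route buys a self-contained sufficiency proof using only the machinery already developed in the paper (at the cost of losing the ``only if'' direction for the triples, which the proposition does not need), whereas the paper's citation is shorter but outsources the key bound. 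The closing inequality $\lambda_3>\frac{1}{4\sin(\pi/8)}$, equivalent to $\sin^2(\pi/8)=\frac{2-\sqrt2}{4}>\frac18$, is verified correctly.
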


\begin{proof}
Because of the symmetry of the situation, every subset of three observables $\{\So^{\lambda\va_k}\}_{k\neq i}$ for some $i$ is jointly measurable if and only if (see formula 16 in \cite{Yu13})
\begin{align}
\lambda\leq(\cos(\pi/4)+2\sin(\pi/8))^{-1}\approx 0.679.
\end{align}
By Prop. \ref{prop:planar}, however, the set of four observables is incompatible if and only if $\lambda>\frac{1}{4\sin(\pi/8)}\approx 0.65$. 
Hence, choosing any $\lambda$ which is between these thresholds gives a 4-Specker set $\{\So^{\lambda\va_k}\}_{k=1}^4$ of qubit observables.
\end{proof}

It has been earlier shown that qubit observables can form a $3$-Specker set \cite{HeReSt08,LiSpWi11}, 
and that a $4$-Specker set exists in dimension $4$ \cite{KuHeFr14}.

\section{Conclusion}

Our method opens up possibilities for future research on both quantum incompatibility and steering. First, we have tested our method on basic symmetric and non-symmetric measurement setups and shown its power in these scenarios by reproducing some known and various new noise thresholds. The open question is how far can the method be fetched, i.e., how to find optimal vectors $\vb_i$ for more complicated measurement settings. Second, it is an open problem to decide which pairs of quantum observables are the most noise resistant ones in finite dimension. Our method gives lower bounds on the noise robustness of sets of measurements. Third, our technique might have interesting applications in quantum steering as every joint observable works as a local hidden state model for steering attempts with a restricted set of measurements \cite{RU2}. These questions are left for now for future works.

\begin{acknowledgments}
RU acknowledges the Finnish Cultural Foundation for financial support.

\end{acknowledgments}
\bibliographystyle{apsrev}

\begin{thebibliography}{39}

\bibitem{HeMiZi16}
T. Heinosaari, T. Miyadera, M. Ziman, J. Phys. A: Math. Theor. \ {\bf 49}, 123001, (2016).

\bibitem{Gühne04}
O. G\"uhne, Phys. Rev. Lett. \textbf{92}, 117903 (2004)

\bibitem{Busch14}
P. Busch, P. Lahti, R. Werner, Rev. Mod. Phys. \textbf{86}, 1261 (2014) 

\bibitem{Busch13}
P. Busch, P. Lahti, R. Werner, Phys. Rev. Lett. \textbf{111}, 160405 (2013)

\bibitem{WoPeFe09}
M.M.~Wolf, D.~Perez-Garcia, C.~Fernandez, Phys.~Rev.~Lett. \textbf{103}, 230402 (2009).

\bibitem{Brunner14}
N. Brunner,  D. Cavalcanti,  S. Pironio,  V. Scarani, S. Wehner, Rev. Mod. Phys. \textbf{86}, 419 (2014)

\bibitem{Heinosaari15}
T. Heinosaari, J. Kiukas, D. Reitzner, J. Schultz, J. Phys. A: Math. Theor. \textbf{48}, 435301 (2015)

\bibitem{MQ}
M. Quintino, T. Vertesi, and N. Brunner, Phys. Rev. Lett.\ {\bf 113}, 160402 (2014).

\bibitem{RU1}
R. Uola, T. Moroder, and O. G\"uhne, Phys. Rev. Lett.\ {\bf 113}, 160403 (2014).

\bibitem{Heinosaari152}
T. Heinosaari, J. Kiukas, D. Reitzner, Phys. Rev. A \textbf{92}, 022115 (2015)

\bibitem{Pusey15}
M. F. Pusey, J. Opt. Soc. Am. B 32, A56 (2015)

\bibitem{RU2}
R. Uola, C. Budroni, O. G\"uhne, and J.P. Pellonp\"a\"a, Phys. Rev. Lett.\ {\bf 115}, 230402 (2015).

\bibitem{Busch86}
P. Busch, Phys. Rev. D, 33, 2253--2261 (1986).

\bibitem{BuHe08}
P. Busch and T. Heinosaari, Quant. Inf. Comp., 8, 0797--0818, (2008).

\bibitem{Yu13}
S. Yu, C. H. Oh, arXiv:1312.6470

\bibitem{LiSpWi11}
Y.-C. Liang, R.W. Spekkens, H.M. Wiseman, Phys. Rep., \ {\bf 506}, 1--39, (2011).

\bibitem{CaHeTo12}
C. Carmeli, T. Heinosaari, and A. Toigo, Phys. Rev. A. \ {\bf 85}, 012109, (2012).

\bibitem{Haapasalo15}
E. Haapasalo, J. Phys. A: Math. Theor. \ {\bf 48}, 255303, (2015).

\bibitem{KuHeFr14}
R. Kunjwal, C. Heunen, and T. Fritz, Phys. Rev. A. \ {\bf 89}, 052126, (2014).

\bibitem{Heinosaari13}
T. Heinosaari, J. Phys. A: Math. Theor. \textbf{46}, 152002, (2013).

\bibitem{Zhu15}
H. Zhu, Sci. Rep. \textbf{5}, 14317, (2015).


\bibitem{JoWi11}
S. J. Jones, H. M. Wiseman, Phys. Rev. A \textbf{84}, 012110 (2011)

\bibitem{Saunders10}
D. J. Saunders, S. J. Jones, H. M. Wiseman, G. J. Pryde, Nature Physics \textbf{6}, 845–849 (2010)

\bibitem{HeReSt08}
T. Heinosaari, D. Reitzner and P. Stano, Found. Phys., \ {\bf 38}, 1133--1147, (2008).

\end{thebibliography}

\onecolumngrid

\clearpage

\appendix

\section{6 observables}

Denote $\chi=\frac{1+\sqrt{5}}{2}$ and $n=\frac{5+\sqrt{5}}{2}$.
Let 
\begin{align*}
\va_ 1= \frac{1}{\sqrt{n}}(\vy + \chi\vz) \, , \quad & \va_ 2= \frac{1}{\sqrt{n}}(-\vy + \chi\vz)\, , \\
\va_ 3= \frac{1}{\sqrt{n}}(\vx + \chi\vy) \, , \quad & \va_ 4= \frac{1}{\sqrt{n}}(-\vx + \chi\vy)\, , \\
\va_ 5= \frac{1}{\sqrt{n}}(\chi\vx + \vz) \, , \quad & \va_ 6= \frac{1}{\sqrt{n}}(-\chi\vx + \vz).
\end{align*}
We choose $\vb_i=\va_i$ for $i=1,\ldots,6$ and obtain the elements of the joint observable $\Go$ according to the 
Fig.~\ref{fig:ico} as follows:

\begin{figure}[h!]
  \includegraphics[width=0.4\textwidth]{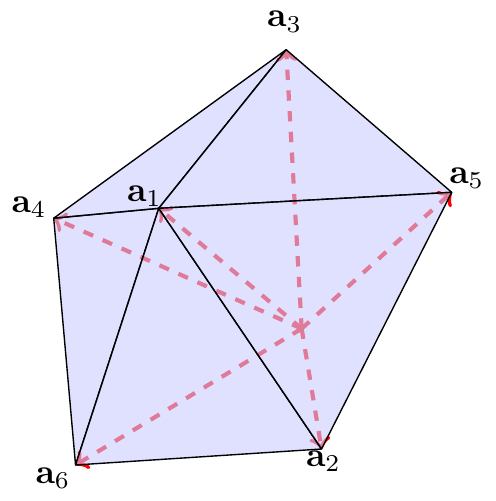}
  \caption{\label{fig:ico} (Color online) Six observables (icosahedron). The desired directions $\va_i$ are the same as the 
    guessing directions $\vb_j$.}
\end{figure}

The nonzero elements of the constructed joint observable $\Go$ are
  \begin{align*}
    & \Go(+,+,+,+,+,+) = \frac{1}{6} \So^{\vb_1}(+)\, ,
    & \Go(-,-,-,-,-,-) = \frac{1}{6} \So^{\vb_1}(-)\, ,\\
    & \Go(+,+,-,-,+,+) = \frac{1}{6} \So^{\vb_2}(+)\, ,
    & \Go(-,-,+,+,-,-) = \frac{1}{6} \So^{\vb_2}(-)\, ,\\
    & \Go(+,-,+,+,+,-) = \frac{1}{6} \So^{\vb_3}(+)\, ,
    & \Go(-,+,-,-,-,+) = \frac{1}{6} \So^{\vb_3}(-)\, ,\\
    & \Go(+,-,+,+,-,+) = \frac{1}{6} \So^{\vb_4}(+)\, ,
    & \Go(-,+,-,-,+,-) = \frac{1}{6} \So^{\vb_4}(-)\, ,\\
    & \Go(+,+,+,-,+,-) = \frac{1}{6} \So^{\vb_5}(+)\, ,
    & \Go(-,-,-,+,-,+) = \frac{1}{6} \So^{\vb_5}(-)\, ,\\
    & \Go(+,+,-,+,-,+) = \frac{1}{6} \So^{\vb_6}(+)\, ,
    & \Go(-,-,+,-,+,-) = \frac{1}{6} \So^{\vb_6}(-).
  \end{align*}
Marginals of $\Go$ are $\So^{\lambda \va_i}$, $i\in 1,\ldots,6$  with $\lambda=\frac{1+\sqrt{5}}{6}$.

\section{10 observables}

We denote $\chi = \frac{1+\sqrt{5}}{2}$.
Let
\begin{align*}
\va_1= \frac{1}{\sqrt{3}}(\vx+\vy+\vz) \, , \quad & \va_2= \frac{1}{\sqrt{3}}(\vx-\vy+\vz)\, , \\
\va_3= \frac{1}{\sqrt{3}}(-\vx + \vy + \vz) \, , \quad & \va_ 4= \frac{1}{\sqrt{3}}(\vx +\vy-\vz)\, , \\
\va_5= \frac{1}{\sqrt{3}}(\chi^{-1}\vy+\chi\vz)\,,\quad &\va_6=\frac{1}{\sqrt{3}}(-\chi^{-1}\vy+\chi\vz)\, ,\\
\va_7= \frac{1}{\sqrt{3}}(\chi^{-1}\vx+\chi\vy)\,,\quad &\va_8=\frac{1}{\sqrt{3}}(-\chi^{-1}\vx+\chi\vy)\, ,\\
\va_9= \frac{1}{\sqrt{3}}(\chi\vx+\chi^{-1}\vz)\,,\quad &\va_{10}=\frac{1}{\sqrt{3}}(-\chi\vx+\chi^{-1}\vz).
\end{align*}
We choose $\vb_i=\va_i$ for $i=1,\ldots,10$ and obtain the outcomes of the joint observable $\Go$ according to the following table:

\begin{center}
\begin{tabular}{| c | c | c |}
\hline
$k$ & outcome of $\So^{\vb_k}$  &  outcome of $\Go$ \\
\hline\hline
1 & + & (+,+,+,+,+,+,+,+,+,-)  \\
 & - &  (-,-,-,-,-,-,-,-,-,+) \\
 \hline
2 & + & (+,+,-,-,+,+,-,-,+,-)\\
  & - & (-,-,+,+,-,-,+,+,-,+) \\
 \hline
3 & + & (+,-,+,-,+,+,+,+,-,+) \\
  & - & (-,+,-,+,-,-,-,-,+,-) \\
\hline
4 & + & (+,-,-,+,-,-,+,+,+,-) \\
  & - & (-,+,+,-,+,+,-,-,-,+) \\
\hline
5 & + & (+,+,+,-,+,+,+,+,+,+) \\
  & - & (-,-,-,+,-,-,-,-,-,-) \\
\hline
6 & + & (+,+,+,-,+,+,-,-,+,+) \\
  & - & (-,-,-,+,-,-,+,+,-,-) \\
\hline
7 & + & (+,-,+,+,+,-,+,+,+,-) \\
  & - & (-,+,-,-,-,+,-,-,-,+) \\
\hline
8 & + & (+,-,+,+,+,-,+,+,-,+) \\
  & - & (-,+,-,-,-,+,-,-,+,-) \\
\hline
9 & + & (+,+,-,+,+,+,+,-,+,-) \\
  & - & (-,-,+,-,-,-,-,+,-,+) \\
\hline
10 & + &(-,-,+,-,+,+,-,+,-,+) \\
  & - & (+,+,-,+,-,-,+,-,+,-) \\
\hline
\end{tabular}
\label{tab:dodecahedron}
\end{center}

The nonzero elements of the constructed joint observable $\Go$ are the ones given by the table \ref{tab:dodecahedron} divided by the number of guessing observables. Marginals of $\Go$ are $\So^{\lambda \va_i}$, $i\in 1,\ldots,10$  with $\lambda=\frac{1+\chi}{5}$.

\end{document}